\newcommand{\keywords}[1]{\par\addvspace\baselineskip
\noindent\keywordname\enspace\ignorespaces#1}
\newcommand{\PreserveBackslash}[1]{\let\temp=\\#1\let\\=\temp}
\begin{document}

\mainmatter

\title{Discovering Restricted Regular Expressions with Interleaving}
\titlerunning{Discovering Restricted Regular Expressions with Interleaving}

%
%
\author{Feifei Peng$^{1,2}$
 \and Haiming Chen$^1$}
%
\authorrunning{F. Peng and H. Chen}

\institute{$^1$State Key Laboratory of Computer Science,\\
    Institute of Software, Chinese Academy of Sciences, Beijing 100190, China\\
$^2$ University of Chinese Academy of Sciences\\
\mailsa\\
}

%
%

\maketitle

\begin{abstract}
Discovering a concise schema from given XML documents is an important problem in XML applications. In this paper, we focus on the problem of learning an unordered schema from a given set of XML examples, which is actually a problem of learning a restricted regular expression with interleaving using positive example strings. Schemas with interleaving could present meaningful knowledge that cannot be disclosed by previous inference techniques. Moreover, inference of the $minimal$ schema with interleaving is challenging. The problem of finding a $minimal$ schema with interleaving is shown to be NP-hard. Therefore, we develop an approximation algorithm and a heuristic solution to tackle the problem using techniques different from known inference algorithms. We do experiments on real-world data sets to demonstrate the effectiveness of our approaches. Our heuristic algorithm is shown to produce results that are very close to optimal.
\keywords{schema inference, interleaving, partial orders, descriptivity}

\end{abstract}

\section{Introduction}
When XML is used for data-centric applications such as integration, there may be no order constraint among siblings~\cite{sibilings12}. Meanwhile, the relative order within siblings may be still important. For example, consider a ticket system with two ticket machines, where there are two bunches of tourists lining up waiting to buy tickets. Each group has two tourists. We can then define the unordered schema for the ticket system. The ordered groups preserve only the relative order of their members. This not only allows individual tourists to insert themselves within a group, but also lets two groups interleave their members. The exact XML Schema Definition (XSD) for the purchasing sequence can be essentially represented as $\allowbreak g1.m1^*g1.m2^*g2.m1^*g2.m2^*$ $|\allowbreak g2.m1^*g2.m2^*g1.m1^*g1.m2^*$ $|\allowbreak g1.m1^*g2.m1^*g1.m2^*g2.m2^*$ $
|\allowbreak g1.m1^*g2.m1^*g2.m2^*g1.m2^*$ $|\allowbreak g2.m1^*g1.m1^* \allowbreak g2.m2^*\allowbreak g1.m2^*$ $|\allowbreak g2.m1^*g1.m1^*g1.m2^*\allowbreak g2.m2^* $, where $gi.mj^*$ means the $j$th member in the $i$th group can buy zero or more tickets. It shows the length of the exact regular expression can be exponential when compared to the number of members in sequences.

Actually, $(g1.m1|g1.m2|g2.m1|g2.m2)^*$ is used in practice~\cite{bone13} instead of the minimal ones, which may permit invalid XML documents (i.e., over-permissive). For example, it may permit the second member in the sequence of the first group to purchase tickets before the first member. There are many negative consequences of over-permissive~\cite{bone13}. Thus it is necessary to study how to infer an unordered minimal schema for this kind of XML documents.
\raggedbottom

Previous researches on XML Schema inference have been done mainly in the context of ordered XML, which can be reduced to learn regular expressions. Gold~\cite{gold67} showed the class of regular expressions is not identifiable in the limit. Therefore numerous papers~(e.g.\cite{BEX09,inferschema,{inferdeter},fast13}) studied inference algorithms of restricted classes of regular expressions. Most of them were based on properties of automata. Bex et al.~\cite{BEX09} proposed learning algorithms for single occurrence regular expressions (SOREs) and chain regular expressions (CHAREs). Freydenberger and K\"{o}tzing~\cite{fast13} gave more efficient algorithms learning a minimal generalization for the above classes. The approach is based on descriptive generalization~\cite{fast13} which is a natural extension of Gold-style learning.

However, there is no such kind of automata for regular expressions with interleaving since they do not preserve the total order among symbols. Thus we have to explore new techniques. While Ciucanu~\cite{Ciu13} proposed learning algorithms for two unordered schema formalisms: disjunctive multiplicity schemas (DMS) and its restriction, disjunction-free multiplicity schemas (MS), both of them disallow concatenation within siblings. Thus they are less expressive than ours. Moreover, the ordering information in our schema formalism can not be fully captured by the three characterizing triples used to construct a DMS or MS.

Inference algorithms in this paper use some similar techniques with algorithms mining global partial orders from sequence data~\cite{frag03,glob00,frequent06}. However, the semantic concepts there are typically quite different from ours. Mannila et al.~\cite{glob00} tried to find mixture models of parallel partial orders. However, to learn unordered regular expressions, series parallel orders may not be sufficient since they can conflict with some data in the whole data set. Another restriction in the above method is that it can only be applied to strings where each symbol occurs at most once. Particularly, Gionis et al.~\cite{frag03} emphasised on recovering the underlying ordering of the attributes in high-dimensional collections of 0-1 data. An implicit assumption is that attribute can also occur at most once. For learning regular expressions with interleaving, symbols in strings can present any times and partial orders among siblings are independent with no violations. Hence many techniques from data mining are not directly applicable. Therefore, learning restricted regular expressions with interleaving remains a challenging problem.

In this paper, we address the problem of discovering a minimal regular expression with interleaving from positive examples. The main contributions of the paper are listed as follows:
\begin{enumerate}[-]
\item We propose a better and more suitable formalism to specify precise unordered XML: the subset of regular expressions with interleaving (SIREs). SIREs can express the content models succinctly and concisely. For example, the above example can be depicted as $(g1.m1^*g1.m2^*)\&(g2.m1^*g2.m2^*)$.

\item We introduce the notion of SIRE-minimal in the terminology of~\cite{fast13} and some properties of SIRE-minimal.
\item We prove the problem of finding a minimal SIRE is NP-hard and develop an approximation algorithm conMiner to find solutions with worst-case quality guarantees and a heuristic algorithm conDAG that mostly finds solutions of better quality as compared to the approximation algorithm conMiner.

\item We conduct experiments comparing our methods with Trang~\cite{Trang} on real world data, incorporating small and large data sets. Our experiments show that conMiner and conDAG outperform existing systems on such data.
\end{enumerate}
The rest of the paper is organized as follows. Section~\ref{sect:pre} contains basic definitions. In Section~\ref{sect:des} we discuss properties of minimal-SIRE. In Section~\ref{sect:infer} an approximation algorithm conMiner and a heuristic algorithm conDAG are proposed. Section~\ref{sect:expre} gives the empirical results. Conclusions are drawn in Section~\ref{sect:conwk}.

\section{Preliminaries}
\label{sect:pre}
Let $u$ and $v$ be two arbitrary strings. By $u\&v$ we denote the set of strings that is obtained by interleaving of $u$ and $v$ in every possible way. That is, $u\&\varepsilon=\varepsilon\&u=u$, $v\&\varepsilon=\varepsilon\&v=v$. If both $u$ and $v$ are non-empty let $u=au',v=bv'$, $a$ and $b$ are single symbols, then $u\&v=a(u'\&v)\cup b(u\&v')$. Let $\Sigma$ be an alphabet of symbols. The regular expressions with interleaving over $\Sigma$ are
defined as: $\emptyset, \varepsilon$ or $a\in \Sigma$ is a regular expression, $E_1^?$, $E_1^*$, $E_1^+$, $E_1E_2$, $E_1|E_2$, or $E_1\&E_2$ is a regular expression for regular expressions $E_1$ and $E_2$. They are denoted as RE(\&). The language described by $E$ is defined as follows: $L(\emptyset)=\{\emptyset\};$ $L(\varepsilon)=\{\varepsilon\};$ $L(a)=\{a\};$
$L(E_1^?)=L(E_1)^?;$ $ L(E_1^+)=L(E_1)^+;$ $ L(E_1^*)=L(E_1)^*;$
$L(E_1E_2)=L(E_1)L(E_2);$ $L(E_1|E_2)=L(E_1)\cup L(E_2);$  $L(E_1\&E_2)=L(E_1)\&L(E_2)$. 
We consider the subset of regular expressions with interleaving (SIREs) defined by the following grammar.
\begin{definition}The restricted class of regular expressions with interleaving (RREs) are $RE(\&)$ over $\Sigma$ by the following grammar for any $a\in \Sigma$:
\begin{equation*}
\begin{array}{rcl}
S::&=&T\&S|T\\
T::&=&\varepsilon|a|a^+|a^?|a^*|TT\\
\end{array}
\end{equation*}
\end{definition}
The subset of regular expressions with interleaving (SIREs) are those RREs in which every symbol can occur at most once. Since SIREs disallow repetitions of symbols, they are certainly deterministic and satisfy the UPA constraint required by the XML specification.

A partial order $M$ for a string $s$ is a binary relation that is reflexive, antisymmetric and transitive. We write $a \prec b$ if $a$ is before $b$ in the partial order. For string $s=x_1\cdots x_l$, the \textit{transitive closure} of $s$ is denoted by $tr(s)=\{(x_i,x_j)|1\leq i<j\leq l\}$, where $l$ is the length of $s$. For example $s=abcd$, $tr(s)=\{ab,ac,ad,bc,bd,cd\}$.

A partial-order set $t$ is a set of symbols together with a partial ordering. We say $ab\in t$ if $a$ precedes $b$ in every string in a string collection. Consistent partial order set (CPOS) $T$ is a set which contains all the disjoint partial-order sets $t_i$ of the given examples. For example, consider $W=\{abcd,dabc\}$. Obviously, $a\prec b\prec c$, $T=\{abc,d\}$. The connection between CPOS and SIRE is directly. That is, given a CPOS, we can write it to the form of SIRE by combining all the elements in CPOS with $\&$. For example, in this case the corresponding SIRE $s=abc\&d$. Therefore, the problem of finding a minimal SIRE can be reduced to the problem of finding a minimal CPOS.
\section{Descriptivity}
\label{sect:des}
This section introduces the notion of minimal expressions. Roughly speaking minimal is the greatest lower bound of a language $L$ within a class of expressions, which is conceptually similar with $infimum$ in the terminology of mathematics.
\begin{definition}[\cite{fast13}]Let $\mathcal{D}$ be a class of regular expressions over some alphabet $\Sigma$. A $\delta\in\mathcal{D}$ is called $\mathcal{D}$-minimal of non-empty language $S\subseteq\Sigma^*$ if $L(\delta)\supseteq S$, and there is no $\gamma\in\mathcal{D}$ such that $L(\delta)\supset L(\gamma)\supseteq S$.
\end{definition}
\begin{proposition}
\label{pro:num}
Let $n$ be the number of alphabet symbols. The number of pairwise non-equivalent $SIREs$ is $\mathcal{O}(n!)$.
\end{proposition}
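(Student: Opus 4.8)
The plan is to count SIREs by decomposing each one into its defining combinatorial data and bounding the number of choices for that data. First I would use the fact that \& denotes interleaving, which is commutative and associative, so every SIRE is equivalent to one in the normal form $T_1 \& \cdots \& T_k$ in which the $T_i$ are its maximal \&-free factors and their order is immaterial; thus a SIRE is determined, up to the rewriting induced by the grammar, by the \emph{unordered} collection $\{T_1,\dots,T_k\}$. Since every symbol occurs at most once, the symbols appearing in the expression are partitioned among the $T_i$, and inside each $T_i$ concatenation fixes a linear order on the symbols of that block. Hence the ``shape'' of a SIRE is exactly a partition of a subset $\Sigma'\subseteq\Sigma$ into blocks, together with a linear order within each block, and the only remaining freedom is that each occurring symbol carries one of the four multiplicities $\{a, a^{?}, a^{+}, a^{*}\}$, contributing a factor of at most $4^{n}$.

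The core of the argument is therefore to count the ways to arrange the symbols into linearly ordered blocks, and I would do this by an injective encoding that deliberately over-counts (an over-count is harmless, since every expressible language arises from at least one syntactic SIRE, so counting shapes yields an upper bound on the number of non-equivalent expressions). Order the blocks canonically, say by their least-indexed symbol, and concatenate their internal orders into a single sequence of distinct symbols; a shape is then recovered from this sequence together with the set of gaps that are block boundaries. The number of sequences of distinct symbols drawn from $\Sigma$ is $\sum_{k=0}^{n} n!/(n-k)! = n!\sum_{j=0}^{n} 1/j! < e\,n!$, the boundary set contributes at most $2^{n-1}$, and multiplicities at most $4^{n}$. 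Multiplying gives a total bounded by $e\,n!\,2^{n-1}4^{n}$, whose logarithm is $n\log n + O(n)$, the same factorial order as $\log n! = n\log n + O(n)$; this is the sense in which the count is $\mathcal{O}(n!)$.

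The step I expect to be the main obstacle is precisely the normalization and bookkeeping in the first paragraph together with the order claim at the end. One must argue carefully that collapsing \& by commutativity and associativity and then fixing a canonical linearization of the blocks neither discards any SIRE nor inflates the count, so that the encoding really is injective on shapes; and one must be explicit that the subexponential and simple-exponential factors ($e$, $2^{n}$, $4^{n}$) are subdominant to the $n!$ term, keeping the whole quantity within factorial order $2^{\Theta(n\log n)}$ rather than strictly below a constant multiple of $n!$. Making this over-counting precise and pinning down the intended meaning of $\mathcal{O}(n!)$ here is where the care is needed; the combinatorics itself is routine once the normal form is fixed.
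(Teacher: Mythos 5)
Your proposal is correct to the same standard as the paper's own argument, but it gets there by a genuinely more self-contained route. The paper likewise factors out the unary operators (a factor $4^n$) and reduces the problem to counting the \&-free block structures, but it identifies those structures with ordered partitions of $\Sigma$ and simply invokes the ordered Bell number with its approximation $a(n)\approx n!/(2(\ln 2)^{n+1})$, concluding $s(n)\approx 4^n n!/(2(\ln 2)^{n+1})$. You instead bound the number of shapes by an explicit injective encoding (a sequence of distinct symbols plus a set of block boundaries), obtaining $e\,n!\,2^{n-1}4^n$; this is elementary, needs no external asymptotics, and, unlike the paper's count, also covers SIREs that use only a subset of $\Sigma$. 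Your modelling of a shape as an unordered collection of internally ordered blocks is in fact the more faithful description of SIRE normal forms than the paper's ordered partitions (the two counts coincide at $n=3$, where the paper draws its example, but diverge from $n=4$ on), though both are of order $n!$ times an exponential factor. You are also explicit about a point the paper leaves unremarked: neither your bound nor the paper's is literally $\mathcal{O}(n!)$ --- both are of the form $c^n\,n!$ --- so the stated bound holds only under the loose, up-to-exponential-factors reading you spell out. Since the proposition is used later only to conclude that there are finitely many pairwise non-equivalent SIREs (Proposition~\ref{pro:everyhas}), this looseness is harmless in both proofs.
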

\begin{proof}
Disregarding operators ?,+,*, the number of SIREs over a finite $\Sigma$ is equivalent to the number of ordered partitioning $|\Sigma|$ symbols. The number of these partitions is given by the $|\Sigma|$th ordered Bell numbers~\cite{de09}. For instance, if $\Sigma=\{a,b,c\}$, the $3$th ordered Bell number $a(3)=13$, and the ordered partitions of $\{a,b,c\}$ is $\{abc, acb, bac, bca, cab, cba, ab\&c, ba\&c, ac\&b, ca\&b, bc\&a, cb\&a, a\&b\&c\}$. They are also distinct partitions of SIREs over $\Sigma$. The ordered Bell number~\cite{bail98} can be approximated as
$a(n)=\sum_{k=0}^{n}k!\binom{n}{k}\approx\frac{n!}{2(ln2)^{n+1}}$.
Since every symbol $a$ in $\Sigma$ has four forms which can be represented as $a, a^?, a^+~and~a^*$, the number of SIREs over $\Sigma$ is $4^na(n)$. Then $s(n)\approx\frac{4^nn!}{2(ln2)^{n+1}}$.
\qed
\end{proof}
We can then prove the existence of minimal regular expressions for SIRE.
\begin{proposition}
\label{pro:everyhas}
Let $\Sigma$ be a finite alphabet. For every language $L\subseteq\Sigma^*$, there exists a SIRE-minimal SIRE $\delta_s$.
\end{proposition}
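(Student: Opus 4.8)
The plan is to exploit the finiteness result from Proposition~\ref{pro:num} to reduce the existence of a SIRE-minimal expression to a straightforward selection argument over a finite set. The key observation is that although there are infinitely many SIREs syntactically (because of nesting and trivial rewrites), the number of \emph{pairwise non-equivalent} SIREs over a fixed finite alphabet $\Sigma$ is $\mathcal{O}(n!)$, hence finite. Since minimality in Definition~2 depends only on the language $L(\delta)$, it suffices to reason about this finite collection of language classes rather than about individual syntactic expressions.

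First I would fix the finite set $\mathcal{S}=\{\delta_1,\dots,\delta_m\}$ of representatives of the non-equivalent SIREs over $\Sigma$, which exists and is finite by Proposition~\ref{pro:num}. Next I would consider the subcollection $\mathcal{S}_L=\{\delta\in\mathcal{S}\mid L(\delta)\supseteq L\}$ of those SIREs whose language contains the target language $L$. The crucial preliminary step is to argue that $\mathcal{S}_L$ is non-empty: one can always take the SIRE that orders all symbols of $\Sigma$ as a single interleaving block with each symbol starred, for instance $a_1^*\&a_2^*\&\cdots\&a_n^*$, whose language is all of $\Sigma^*$ and therefore contains $L$. This guarantees we are selecting from a non-empty finite set.

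Then I would order the elements of $\mathcal{S}_L$ by language inclusion and simply pick a minimal element: since $\mathcal{S}_L$ is finite, the partial order of language containment on $\mathcal{S}_L$ must have at least one minimal element $\delta_s$, i.e.\ one for which no $\gamma\in\mathcal{S}_L$ satisfies $L(\delta_s)\supset L(\gamma)$. I would then verify that this $\delta_s$ satisfies Definition~2 verbatim: it has $L(\delta_s)\supseteq L$ by membership in $\mathcal{S}_L$, and there is no SIRE $\gamma$ with $L(\delta_s)\supset L(\gamma)\supseteq L$, because any such $\gamma$ would (up to equivalence) lie in $\mathcal{S}_L$ and contradict the minimality of $\delta_s$ in the finite poset.

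The main obstacle, and the step deserving the most care, is the claim that every SIRE is language-equivalent to one of the finitely many representatives counted in Proposition~\ref{pro:num}. The proposition counts ordered partitions of $\Sigma$ with the four multiplicity decorations, so I would need to confirm that the grammar defining SIREs produces nothing outside this normal form up to equivalence---in particular that nesting of concatenation and interleaving within a SIRE collapses, under the single-occurrence restriction, to an ordered partition of $\Sigma$ with per-symbol multiplicities. Once this normalization is in hand, the existence argument is just the standard fact that a finite poset has minimal elements; no infinite descending chains can occur precisely because the relevant quotient under language equivalence is finite.
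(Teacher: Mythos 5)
Your proposal is correct and rests on the same key ingredient as the paper's own proof, namely the finiteness of pairwise non-equivalent SIREs from Proposition~\ref{pro:num}; the paper phrases it as a contradiction via an impossible infinite descending chain $L(\beta_i)\supset L(\beta_{i+1})\supseteq L$, while you directly select a minimal element of the finite poset of containing languages, which is the same argument in direct form. Your explicit note that the set of candidates is non-empty (via $a_1^*\&\cdots\&a_n^*$) is a small point the paper leaves implicit, but otherwise the two proofs coincide.
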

\begin{proof}Assume there is a language $L$ over $\Sigma$ such that no expression $\alpha\in SIRE$ is SIRE-minimal. This implies that there is an infinite sequence $(\beta_i)_{i\geq0}$ of expressions
from SIRE with $\alpha=\beta_0$ and $L(\beta_i)\supset L(\beta_{i+1})\supseteq L$ for all $i\geq0$. This contradicts the
fact that there are only a finite number of non-equivalent SIREs over $\Sigma$ by Proposition~\ref{pro:num}.
\qed
\end{proof}

\begin{proposition}
\label{pro:minimal}
For any example string set $E$ over $\{a_1,\cdots,a_n\}$, let $S=s_1\&\cdots\allowbreak\&\allowbreak s_l$ be a SIRE such that $E\subseteq L(S)$. $S$ is a minimal SIRE if and only if:\\
(1) the number of $s_i$ is minimized and\\
(2) the size of each $s_i$ is as large as possible.
\end{proposition}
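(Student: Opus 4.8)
The plan is to reduce the comparison of SIRE languages to the comparison of the partial orders they induce, and then to read conditions (1) and (2) off as the assertion that this partial order is as rich as possible. For a SIRE $S=s_1\&\cdots\&s_l$, let $P_S$ denote the partial order on its symbols obtained as the disjoint union of the chains given by the components $s_i$: each $s_i$ is a concatenation of distinct symbols and hence totally ordered, while symbols lying in different components are incomparable. Because $\&$ realises arbitrary interleaving, $L(S)$ is precisely the set of strings whose symbol occurrences respect every precedence of $P_S$; the operators $?,+,*$ only govern multiplicities and, once $E\subseteq L(S)$ is imposed, are forced, so they play no role in the order comparison.

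First I would prove the pivotal lemma: for SIREs $S,S'$ on the same symbols with $E\subseteq L(S)\cap L(S')$, one has $L(S')\subseteq L(S)$ iff $P_{S'}\supseteq P_S$. The easy direction is that more precedences can only forbid more strings. For the converse I would take a pair $a\prec b$ lying in $P_S$ but not in $P_{S'}$ and build a string in $L(S')$ that places an occurrence of $b$ before one of $a$, witnessing $L(S')\not\subseteq L(S)$. Granting this, $S$ is minimal exactly when $P_S$ is inclusion-maximal among the partial orders that are disjoint unions of chains and are consistent with $E$; equivalently, when no further precedence can be added without violating some string of $E$.

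Next I would translate maximality of $P_S$ into (1) and (2). For minimal $\Rightarrow$ (1),(2): if the number of components were not minimal, two of them could be fused into a single chain still consistent with $E$, which strictly enlarges $P_S$ and, by the lemma, yields $\gamma$ with $E\subseteq L(\gamma)\subsetneq L(S)$, contradicting minimality; this gives (1). Similarly, if some $s_i$ could be lengthened by absorbing a symbol while remaining consistent, the resulting expression again strictly enlarges $P_S$, contradicting minimality; this gives (2). For the converse (1),(2) $\Rightarrow$ minimal I would argue contrapositively: any $\gamma$ with $E\subseteq L(\gamma)\subsetneq L(S)$ satisfies $P_\gamma\supsetneq P_S$, and since every $S$-component is a chain of $P_\gamma$, it sits inside a single $\gamma$-component, so the $\gamma$-partition coarsens the $S$-partition. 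A strict enlargement therefore either fuses two components (contradicting (1)) or lengthens one of them (contradicting (2)).

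I expect the interplay between (1) and (2) to be the main obstacle. The two conditions are not independent — trading a symbol between components keeps the component count fixed while changing the chain sizes — so care is needed to show that, jointly, they certify inclusion-maximality of $P_S$ rather than merely a local optimum. Here I would lean on the convexity of $k\mapsto\binom{k}{2}$, which counts the precedences contributed by a chain of length $k$: every admissible fusion or extension strictly increases $\sum_i\binom{|s_i|}{2}$, so a configuration admitting neither move attains a maximal relation set and hence a minimal language. Making this last implication airtight — in particular accounting for the possibility of several pairwise-incomparable maximal configurations that are all minimal yet differ in their chain-size profiles — is the delicate part of the argument.
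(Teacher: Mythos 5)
The paper omits its own proof of this proposition (``The proof was omitted for space reasons''), so there is no official argument to compare against; judged on its own terms, your reduction is the right framework but the final translation step has a genuine gap. Your pivotal lemma ($L(S')\subseteq L(S)$ iff $P_{S'}\supseteq P_S$, for expressions carrying the tightest operators consistent with $E$) is sound, and it correctly reduces minimality to inclusion-maximality of $P_S$. Moreover, since $P_\gamma\supsetneq P_S$ forces every chain of $S$ into a single component of $\gamma$, inclusion-maximality is exactly the \emph{local} condition ``no two components of $S$ can be merged into one valid chain.'' The gap is the claim ``if the number of components were not minimal, two of them could be fused into a single chain still consistent with $E$.'' That implication is false. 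Take $E=\{abcd,\;badc,\;acbd\}$: the conflicting pairs are exactly $\{a,b\},\{b,c\},\{c,d\}$, while $a\prec c$, $a\prec d$, $b\prec d$ hold in every string. Then $ac\&bd$ is a valid SIRE with two components, but $ad\&b\&c$ is also valid, has three components, and admits no merge of any two of its components (each candidate merge would put a conflicting pair into one chain). By your own lemma $ad\&b\&c$ is therefore SIRE-minimal while violating condition (1), so ``minimal $\Rightarrow$ (1)'' cannot be derived by your route --- indeed it fails under the literal, global reading of (1).

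Your closing paragraph correctly identifies this as the delicate point, but the proposed repair via convexity of $k\mapsto\binom{k}{2}$ does not close it: inclusion-maximality only says that no single fusion or extension is available, and that does not imply the global minimum of $l$ (or the global maximum of $\sum_i\binom{|s_i|}{2}$) is attained; the example above is precisely a local optimum that is not global. What your argument actually establishes is: condition (1) alone implies minimality (any strict enlargement of $P_S$ coarsens the partition and so reduces the component count), and minimality is \emph{equivalent} to the local no-merge condition. To obtain the stated biconditional you would have to either reinterpret (1)--(2) as that local condition, or prove that every locally maximal chain partition of the constraint structure attains the global minimum --- which the counterexample rules out in general. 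A smaller point: (1)--(2) say nothing about the operators $?,+,*$, so the ``if'' direction needs an explicit clause that the tightest operators are chosen; you flag this in passing but it should be stated as a hypothesis rather than absorbed into the lemma.
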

The proof was omitted for space reasons.

In other words, a minimal SIRE is the most specific SIRE that consistent with the given example strings. For instance, all of $S_1=a\&bc\&d$, $S_2=abc\&d$ and $S_3=ad\&bc$ can accept $E=\{abcd,adbc\}$. However, since $S_1=(ad|da)\&bc=(ad\&bc)|(da\&bc)=S_3|(ad\&bc)$, we can get $L(S_1)\supset L(S_3)$ which means $S_1$ is not minimal. As for $S_2$ and $S_3$, since $L(S_2)=\{abcd,abdc,adbc,dabc\}$ and $L(S_3)=\{bcad,bacd,badc,abcd,abdc,adbc\}$, this means $S_3$ is not minimal. As we shall see, $S_2$ is a better approximation of E. In fact, $S_2$ can be verified to be a minimal by referring to Proposition~\ref{pro:minimal}.

\section{Minimal SIREs}
\label{sect:infer}
In this section, we first prove finding a minimal SIRE for a given set of strings is NP-hard by reducing from finding a maximum independent set of a graph, which is a well-known NP-hard graph problem~\cite{MIS14}. Then we present learning algorithms that construct approximatively minimal SIREs.
\subsection{Exact Identification}
First, we introduce the notion of maximum independent set of a graph~\cite{MIS14}. Consider an undirected graph $G(V,E)$, an independent set (IS) is a set $I\subseteq V$ such that $\forall{u,v\in V},~(u,v)\notin E$. The maximum independent set (MIS) problem consists in computing an IS of the largest size. Next, we define the problem \verb"all_mis" which takes a graph $G$ as input, finding a MIS $S'$ of G by applying function \verb"max_independent_set", and repeating the step for subgraph $G[V-S']$ until there exists no vertex in the subgraph. In other words, \verb"all_mis" is to divide $V$ into disjoint subsets by \verb"max_independent_set". Clearly, problem \verb"all_mis" is NP-hard.

The main idea of finding a minimal SIRE is based on the observation that there are sets of conflicting siblings that cannot be divided into the same subset of CPOS. A pair $xy$ is called forbid pair in a string database if both $xy$ and $yx$ exists in the transitive closure of strings. The set of forbid pairs is called a $constraint$. By Proposition~\ref{pro:minimal}, if we split the set of symbols in a $constraint$ into several subsets $t_1,\cdots,t_n$ such that $n$ is minimized and for each $i\in[1..n]$, $t_i$ is the longest of its alternatives. Then the set of $t_i$ where $i\in[1..n]$, is a minimal CPOS which can be transformed to a minimal SIRE.
\begin{lemma}Minimal SIRE finding problem is NP-hard.
\end{lemma}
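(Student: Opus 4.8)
The plan is to reduce from the \texttt{all\_mis} problem, exploiting the correspondence, already set up in the preceding paragraph, between a minimal SIRE and a minimal CPOS obtained by splitting the symbols of the $constraint$ into conflict-free blocks. The key observation is that the set of forbid pairs induces an undirected \emph{forbid graph} $G_W$ on the alphabet: vertices are symbols, and $\{x,y\}$ is an edge precisely when $xy$ is a forbid pair. A block $t_i$ of a CPOS is conflict-free exactly when it is an independent set of $G_W$ (two symbols can share a concatenation component only if they never occur in both relative orders, otherwise no single ordering of $t_i$ could be consistent with the examples). Hence, by Proposition~\ref{pro:minimal}, a minimal SIRE corresponds to partitioning $V(G_W)$ into the fewest blocks (condition (1)) with each block as large as possible (condition (2)), which is exactly the output of \texttt{all\_mis}: repeatedly extract a maximum independent set. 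So computing a minimal SIRE on $W$ is tantamount to solving \texttt{all\_mis} on $G_W$.

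To turn this into a reduction I would first exhibit a polynomial-time gadget realizing an \emph{arbitrary} graph as a forbid graph. Given $G(V,E)$ with $V=\{a_1,\dots,a_m\}$, build the string database
\[
W=\{\,a_1a_2\cdots a_m\,\}\ \cup\ \{\,a_ja_i \mid (a_i,a_j)\in E,\ i<j\,\}.
\]
The base string supplies the forward order $(a_i,a_j)$ for every pair with $i<j$, while each two-symbol string $a_ja_i$ supplies the reverse order $(a_j,a_i)$ only for edges. I would then verify the two directions of correctness: for every edge $(a_i,a_j)$ both $(a_i,a_j)$ and $(a_j,a_i)$ lie in $\bigcup_s tr(s)$, so it is a forbid pair; and for every non-edge $(a_p,a_q)$ with $p<q$ the reverse order $(a_q,a_p)$ never appears (only edge endpoints ever co-occur in reverse), so it is \emph{not} a forbid pair. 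Thus the forbid graph of $W$ equals $G$ exactly.

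It remains to establish the two-way correspondence between minimal SIREs of $W$ and solutions of \texttt{all\_mis} on $G$. Every partition of $V$ into independent sets $I_1,\dots,I_k$ yields a valid SIRE $S=t_1\&\cdots\&t_k$, where each $t_c$ concatenates the symbols of $I_c$ in increasing index order, each decorated with ${}^?$ so that both the base string (read the components in index order) and every two-symbol string $a_ja_i$ (its endpoints live in distinct components and interleave freely) are accepted, giving $W\subseteq L(S)$. Conversely, no component of any SIRE consistent with $W$ may contain an edge of $G$, so the components of a minimal SIRE form a proper partition into independent sets. Matching conditions (1) and (2) of Proposition~\ref{pro:minimal} to ``fewest blocks, each a maximum independent set'' shows a minimal SIRE is precisely an \texttt{all\_mis} solution, and since \texttt{all\_mis} is NP-hard while the gadget is polynomial, the minimal SIRE finding problem is NP-hard.

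I expect the main obstacle to be the gadget analysis rather than the final counting argument: I must argue carefully that the construction introduces \emph{no spurious forbid pairs} among non-adjacent vertices (the reason for using length-two reversal strings rather than reversing a pair inside a full permutation, which would invert many unintended pairs), and that the optimality transfer respects \emph{both} clauses of Proposition~\ref{pro:minimal} simultaneously—in particular that insisting each $s_i$ be as large as possible coincides with the maximum-independent-set choice made by \texttt{max\_independent\_set}, so that an optimal SIRE and an optimal \texttt{all\_mis} partition determine one another without loss.
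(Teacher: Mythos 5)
Your proposal is correct and follows essentially the same reduction as the paper: reduce from \texttt{all\_mis} by identifying the graph's edges with forbid pairs, so that the blocks of a CPOS are exactly independent sets and Proposition~\ref{pro:minimal} ties a minimal SIRE to repeatedly extracting maximum independent sets. The only difference is that you additionally build an explicit polynomial-size string database (one base permutation plus a length-two reversal per edge) realizing an arbitrary graph as the forbid graph and check both directions of the correspondence, whereas the paper simply hands the edge set over as the \emph{constraint} instance; your gadget fills in that step, since the input to minimal SIRE finding is formally a set of strings rather than a constraint set.
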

\begin{proof}
We demonstrate that \verb"all_mis" can be reduced in polynomial time to minimal SIRE finding problem. Given an instance of \verb"all_mis", we can generate a corresponding instance of minimal SIRE finding as follows. For the graph $G$ in \verb"all_mis", the reduction algorithm computes the $constraint$ set by adding all edges in $G$ to $constraint$, which is easily obtained in polynomial time. The output of the reduction algorithm is the instance set $constraint$ of minimal SIRE finding problem. $t_i$ in CPOS is the longest of its alternatives if and only if \verb"all_mis" computes a maximum independent set at the $i$th step. Thus, minimal SIRE finding problem is equivalent to the original \verb"all_mis". Since \verb"all_mis" is NP-hard, minimal SIRE finding problem is NP-hard.
\qed
\end{proof}
\subsection{Approximation Algorithm}
The process of this approach is formalized in Algorithm~\ref{alg:consi}. Algorithm~\ref{alg:consi} works in four steps and we illustrate them on the sample $E=\{abcd,aadbc,bdd\}$. The first step (lines 1-2) computes the non-constraint and constraint set using the function \verb"tran_reduction". The transitive~closure of $E$ is $tr=\{ab,ac,ad,bc,bd,db,\allowbreak dc\}$. Add $uv$ to $constraint$ if $vu\in tr$. Add $uv$ to $L_2$ otherwise. We get $L_2=\{ab,ac,ad,bc\}$ and $constraint=\{bd,cd,db,dc\}$. Construct an undirected graph $G$ using element in $constraint$ as edges. The second step (lines 3-7) is to select a MIS of $G$, add it to list $allmis$ and delete the MIS and their related edges from $G$. The process is repeated until there exists no nodes in $G$. The problem of finding a maximum independent set is an NP-hard optimization problem. As such, it is unlikely that there exists an efficient algorithm for finding a maximum independent set of a graph. However, we can find a MIS in polynomial time with a approximation algorithm, e.g. the \verb"clique_removal" algorithm proposed in~\cite{ramsey92} that finds the approximation of maximum independent set with performance guarantee $\mathcal{O}(n/(\log n)^2)$ by excluding subgraphs. For graph $G$, we obtain $allmis=\{\{b,c\},d\}$. Next, we add the non-constraint symbols to the first MIS. Then we have $allmis=\{\{a,b,c\},d\}$. The third step (lines 8-10) computes the topological sort for all subgraphs induced by subset of $L_2$ and add the result to $T$. For the sample, it returns $T=\{abc,d\}$. Finally, the algorithm returns the SIRE whose corresponding counting operators $1,*,+,?$ can be inferred using technique in algorithm \verb"CRX"~\cite{BNS06}. For the sample, it returns $a^*bc?\&d^+$.
\begin{algorithm}{\footnotesize}
\caption{{\it conMiner}($W$)}
\label{alg:consi}
\begin{algorithmic}[1]
\Require Set of words $W=\{w_1,...,w_n\}$
\Ensure a minimal SIRE $T$
\State $L_2,constraint=tran\_reduction(W,T)$
\State $G=Graph(constraint)$
\While {$G.nodes()!=null$}
\State $v=clique\_removal(G)$
\State $G=G-v$
\State $allmis.append(v)$
\EndWhile
\State $allmis[0]=allmis[0].union(alphabet(L2)-alphabet(constraint))$
\For {each $mis\in allmis$}
\State $H=Graph(mis,L2)$
\State $T.append(topological\_sort(H))$
\EndFor
\State\Return $learner_{oper}(W,T$)
\end{algorithmic}
\end{algorithm}

\subsection{Heuristic Algorithm}
Although a number of approximation algorithms and heuristic algorithms have been developed for the maximum independent set problem, on any given instance, they may produce a SIRE that is very far from optimal. We introduce a heuristic directed acyclic graph construction algorithm directly computing a minimal SIRE. The main idea is to cluster the vertices of the existing directed graph into several disconnected subgraphs. The graph is constructed incrementally to preserve CPOS within each vertex using a greedy approach. The pseudocode of algorithm \verb"conDAG" is given in Algorithm~\ref{alg:consis}.

The input to this algorithm is the same as the input of the \verb"conMiner". The algorithm maintains lists $p,q$ as records to keep track of pairs violating the partial order constraint and lists $s,t$ to record pairs violating the partial order constraint of the string under reading. Note that $(a,b)$ violating the partial order constraint means there exist some $w_1,w_2\in W$ such that $a\prec b$ in $w_1$ and $b\prec a$ in $w_2$.

Let $ab$ be two adjacent symbols in a word $w$. The \verb"add_or_break" function checks whether edge $ab$ is added to the present graph G. If there exists no path from $b$ to $a$, no path from $a$ to $b$ in G and edge $ab$ will not make a connection between some $p[i]$ and $q[i]$, we add edge $a\rightarrow b$ in G. Self-loops such as $f\rightarrow f$ are always ignored since they have no influence on the partial order constraints. However, if there exist paths from $b$ to $a$ in $G$, $(a,b)\notin(p[i],q[i]),(q[i],p[i])$ and $a,b$ are not in $p[i],q[i]$ at the same time for all $i<len(p)$, we should break all paths from $b$ to $a$. The breakpoint can be found as below. Suppose there exists a path $u=b\alpha_1...a$, $\alpha_{0}=b$ in $G$, and substring of $w$ over $\{b,\alpha_1,...,a\}$ is $\alpha_i...a$, then we delete edge $\alpha_{i-1}\rightarrow \alpha_i$, add edge $\beta\rightarrow \alpha_i$ for all nodes $\beta$ that $\beta\rightarrow b$, and add edge $\alpha_{i-1}\rightarrow \gamma$ for all nodes $\gamma$ that $a\rightarrow \gamma$. In the end, add $b\alpha_1...\alpha_{i-1}$ to $p$,$s$ and add $\alpha_i...a$ to $q$,$t$.
\begin{figure}[h]
\centering
\includegraphics[width=0.5\textwidth]{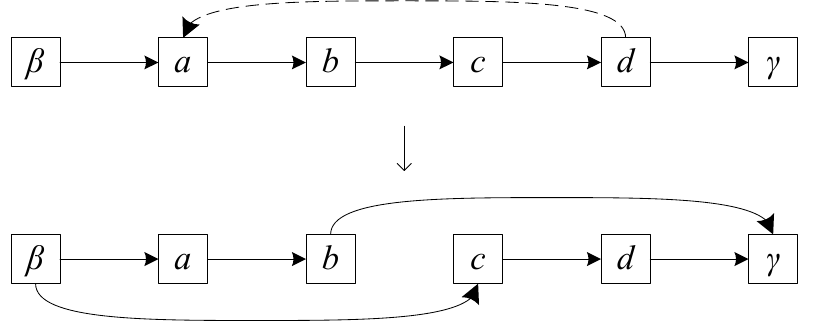}
\caption{This is an example to find the breakpoint}
\label{fig:addorbreak.pdf}
\end{figure}

Example in Figure~\ref{fig:addorbreak.pdf} shows how the function works. $W=\{\beta abcd\gamma,cda\}$, initialize empty list $p$,$q$,$s$,$t$ and empty graph $G$. After reading $w_1$, list $p$,$q$,$s$,$t$ are still empty. When reading $da\in w_2$, there already exists a path $abcd$ and $(d,a)\notin (p[i],q[i]),(q[i],p[i])$. We should break $abcd$. Since $substring(w_2,\{a,b,c,d\})=cda$, breakpoint is $c$. Then we delete edge $b\rightarrow c$, and add edges $\beta\rightarrow c$,$b\rightarrow \gamma$. In the end, add $ab$ to $p$,$s$ and add $cd$ to $q$,$t$.

The \verb"consistent" function scans the whole string $w$ by sequence to execute \verb"add_or_break" function. Each time after reading two adjacent symbols $ab$, for all pairs $(\alpha_1a\alpha_2,\alpha_3 c)$ or $(\alpha_3 c,\alpha_1a\alpha_2)\in (s,t)$, handle $cb$ likewise. Because $(\alpha_1a\alpha_2,\alpha_3 c)$ or $(\alpha_3 c,\alpha_1a\alpha_2)\in (s,t)$ declare $a\prec c$ and $c\prec a$ are in $w$, if $a\prec b$ in $w$, $c\prec b$ is also in $w$. Consider $acab$ as an example, $c$ and $a$ have been two parts after reading $ca$, $a$ has been added to $p$ and $s$ and $c$ added to $q$ and $t$. After reading the next two symbols $ab$, add edge $a\rightarrow b$. Next we should consider $cb$ since $a\in s[0],c\in s[0]$, thus add edge $c\rightarrow b$. The \verb"topological_sort(g)" construct a topological ordering of DAG in linear time. The \verb"learner_oper" is used to infer operators $?,+,*$ for each vertex.

The \verb"conDAG" algorithm combines all the functions. The constructed
graph is denoted by $G$ and the corresponding set of partitions by $C$. In each iteration, it invokes \verb"consistent" to update $G$ using the $i$th string. Then it adds all the paths from the set of vertices of in-degree zero to the set of vertices of out-degree zero. To be able to calculate the largest independent partial-order plans, a preprocessing phase is implemented. First, we consider the elements of $C$ in decreasing order of size. In each iteration, whenever we find two elements that the one contains elements of $p[i]$ and the other one contains elements of $q[i]$, we updates the shorter one by removing the common elements. Next, we merge all the lists in $C$ that share common elements. The preprocess terminates when every symbol is included in one and only one list. The following steps of the algorithm are the same as the third and the forth step of the \verb"conMiner".
\begin{algorithm}[t!]
\begin{algorithmic}[1]
\Function{$consistent$}{$G,w,p,q$}
\State $s,t:=\emptyset$, $i:=1$
\While {$i<|w|-1$}
\If {$w[i]\neq w[i+1]\wedge(w[i],w[i+1])\notin (p,q),(q,p)$}
\State $add\_or\_break(G,w,w[i],w[i+1],p,q,s,t)$
\EndIf
\For{$j:=1$ to $|s|$}
\If {$(w[i]\in s[j])\wedge((t[j][-1],w[i+1])\notin(p,q))$}
\State $add\_or\_break(G,w,last\_symbol(t[j]),w[i+1],p,q,s,t)$
\EndIf
\If {$(w[i]\in t[j])\wedge(s[j][-1],w[i+1])\notin(p,q))$}
\State $add\_or\_break(G,w,last\_symbol(s[j]),w[i+1],p,q,s,t)$
\EndIf
\EndFor
\State $i++$
\EndWhile
\EndFunction
\end{algorithmic}
\end{algorithm}

\begin{algorithm}[htbp]
\caption{{\it $conDAG$}($W$)}
\label{alg:consis}
\begin{algorithmic}[1]
\Require Set of unordered words $W=\{w_1,...,w_n\}$
\Ensure a minimal SIRE
\State $L_2,constraint=tran\_reduction(W,T)$
\State initialize graph G, $p,q:=\emptyset$
\For {$i:=1$ to n}
\State $consistent(G,w_i,p,q)$
\EndFor
\State $C=all\_paths(G,source,destination)$
\State remove the common elements from the shorter of $c_i,c_j\in C$ if $c_i[m]+c_j[n]\in constraint$.
\State merge all lists that share common elements in $C$
\For {each $mis$ in $C$}
\State $H=Graph(mis,L_2)$
\State $T.append(topological\_sort(H))$
\EndFor
\State\Return $learner_{oper}(W,T$)
\end{algorithmic}
\end{algorithm}

The time complexity analysis of this algorithm is straightforward. $add\_or\_brea\allowbreak k\allowbreak(G,\allowbreak w,a,\allowbreak b,p,q,s,t)$ can find all possible paths between two given nodes by modifying the DFS which needs $\mathcal{O}(|V| + |E|)$ steps. Breaking a circle requires $\mathcal{O}(|V|)$. Therefore, an overall time complexity for $add\_or\_break$ is $\mathcal{O}(c|V| + |E|)$, where $c$ is number of paths between the given nodes in the graph. When there exist $n(n-2)/2$ inconsistent terms in W, every two symbols are not in a group, which is the worst case. When tackling of $\alpha_{i-1}\alpha_i$, $len(p)=(n-i+1)(n-i)/2$, deciding whether $(\alpha_{i-1},\alpha_i)\in (p[j],q[j]),(q[j],p[j])$ needs $(n-i+1)(n-i)$ time. Deciding whether $\alpha_i\in s[j],t[j]$ needs $n-i$ time. There is only one path between two nodes, thus $c=1$. So the total time of $consisitent$ is $\sum_{i=2}^{n}(n-i)^2(|V|+|E|)$ where $|V|=n$, and $|E|$ is $\mathcal{O}(n)$ according to the analysis above.

The $tran\_reduction$ computation requires $\mathcal{O}(n^2)$ time, where $n$ is the number of distinct symbols. Each iteration requires $\mathcal{O}(n^3)$ time to maintain the graph. Computing all paths from source to destination can be done in $\mathcal{O}(n^2)$ time, and $topological\_sort(g)$ constructs a topological ordering of DAG in linear time, thus $\mathcal{O}(|V| + |E|)$ steps are sufficient. Inference of operators $?,+,*$ needs time $\mathcal{O}(m)$. Hence the time complexity of the algorithm is $\mathcal{O}(tn^4+m)$, where $m$ is the sum of length of the input example strings, $n$ the number of alphabet symbols and $t$ the number of strings.

To illustrate our algorithm, consider the example $E=\{abcd,aadbc,bdd\}$, $L_2=\{ab,ac,ad,bc\}$, $constraint=\{bd,cd,db,dc\}$ in the above section. A directed graph which consists of vertex $V=\{a,b,c,d\}$ and edges $E=\{ab,bc,ad\}$ can be obtained. $p=\{bc\}$ and $q=\{d\}$. All paths from source to destination are $C=\{abc,ad\}$. Since $bd\in constraint$, $C[2]$ is updated by removing the common elements between $C[1]$ and $C[2]$. $C[2]$ is $d$. The final $C$ is $\{abc,d\}$. The following steps are the same.
\section{Experiments}
\label{sect:expre}
In this section, we validate our approaches on real-life DTDs, and compare them with that of Trang~\cite{Trang}. All experiments were conducted on an IBM T400 laptop computer with a Intel Core 2 Duo CPU(2.4GHz) and 2G memory. All codes were written in python.

The number of corpora of XML documents with an interesting schema is rather limited. We obtained our real-life DTDs from the XML DATA repository maintained by Miklau~\cite{schema}. Unfortunately, most of them are either not data-centric or not with a DTD. Specifically, We chose the DBLP Computer Science Bibliography corpus, a data-centric database of information on major computer science journals and proceedings.
\begin{table}[H]
\begin{center}
\tabcolsep 2.5pt
\begin{tabular}{|l|m{10.25cm}<{\centering}|} \hline
\multirow{8}{*} \textbf{Element} & Original DTD\\
name& Exact Minimal DTD\\
Sample& Result of conMiner\\
size& Result of conDAG \\
&Result of Trang\\
Number of&Simplified Exact Minimal DTD\\
interleaving&Simplified Result of conMiner\\
&Simplified Result of conDAG\\\hline
\multirow{3}{*}\textbf{inproceedings}&$(a_1|a_2|\cdots|a_{22})^*$\\
$2122274$&${a_1}^*a_{12}?{a_5}^*a_9?a_{18}?{a_{15}}^*\&a_3a_6{a_{11}}^*
\&{a_{19}}^*\&{a_{13}}^*\&a_4\&{a_{14}}^*$\\
$2122274$&${a_5}^*a_{18}?{a_{15}}^*\&a_{12}?a_9?
{a_{13}}^*\&{a_1}^*{a_{14}}^*\&a_6{a_{11}}^*\&a_3\&a_4\&{a_{19}}^*$\\
$2122274$&${a_1}^*a_4a_9?{a_{11}}^*{a_{15}}^*
\&a_3a_{12}?{a_5}^*a_{18}?\&{a_{13}}^*\&a_6\&{a_{14}}^*\&{a_{19}}^*$\\
$2122274$&$(a_1|a_3|a_5|a_6|a_9|
a_{11}|a_{12}|a_{13}|a_{14}|a_{15}|a_{18}|a_{19})^+$\\

5&6\&3\&1\&1\&1\&1\\
6& 3\&3\&2\&2\&1\&1\&1\\
5& 5\&4\&1\&1\&1\&1\\
\hline
\multirow{3}{*}\textbf{article}&$(a_1|a_2|\cdots|a_{22})^*$\\
111608&${a_1}^*a_{17}?a_5^*a_{12}?{a_{15}}^*\&a_3a_6a_{11}?\&{a_{13}}^*
\&a_8\&a_{10}?\&{a_{14}}^*\&a_9?$\\
111608&$a_{17}?a_{12}?a_9?{a_{15}}^*\&{a_1}^*a_6a_{11}?\&a_3\&
a_5^*\&{a_{13}}^*\&a_8\&a_{10}?\&{a_{14}}^*$\\
111608&${a_3}^*a_{17}?a_6a_{11}?\&{a_1}^*a_8a_{12}?{a_{15}}^*\&{a_{13}}^*
\&a_5^*\&a_{10}?\&a_{12}?\&a_9?$\\
111608&$a_2?(a_1|a_3|a_5|a_6|a_8|a_9|a_{10}|a_{11}|a_{12}|a_{13}
|a_{14}|a_{15}|a_{17})^+$\\

6& 5\&3\&1\&1\&1\&1\&1\\
7& 4\&3\&1\&1\&1\&1\&1\&1\\
6 &4\&4\&1\&1\&1\&1\&1\\

\hline
\multirow{3}{*}\textbf{proceedings}&$(a_1|a_2|\cdots|a_{22})^*$\\
$3007$&${a_2}^*{a_3}^+a_{18}?a_{21}?a_{8}?a_{10}?a_{13}?a_{12}?{a_{15}}^*a_{19}?
a_7?a_9?\&a_4?\&a_{17}?\&a_6\&a_{20}^*\&a_{11}?$\\
$3007$&${a_2}^*{a_3}^+a_{19}?a_{13}?a_{20}^*{a_{15}}^*a_{12}?\&a_4?
a_7?a_{8}?a_9?\&a_{21}?a_{18}?a_{10}?\&a_6\&a_{17}?\&a_{11}?$\\
$3007$&${a_2}^*{a_3}^+a_{8}?a_{18}?a_{21}?a_{10}?a_9?a_{19}?a_{13}?a_7?{a_{15}}^*
\&a_4?a_{12}?\&a_{17}?\&a_6\&a_{20}^*\&a_{11}?$\\
$3007$&${a_2}^*{a_3}^+(a_4|a_6|a_7|a_8|a_9|a_{10}|
a_{11}|a_{12}|a_{13}|a_{17}|a_{18}|a_{19}|a_{20}|a_{21})^+{a_{15}}^*$\\

5&12\&1\&1\&1\&1\&1\\
5&7\&4\&3\&1\&1\&1\\
5&11\&2\&1\&1\&1\&1\\

\hline
\multirow{3}{*}\textbf{incollection}&$(a_1|a_2|\cdots|a_{22})^*$\\
$1009$&${a_1}^*a_3a_4a_{17}?a_{20}?a_{16}?a_{11}?{a_{15}}^*a_{14}?
\&a_{13}?a_{19}?\&a_{5}?\&a_{6}$\\
$1009$&${a_1}^*a_3a_{17}?a_{6}\&{a_{15}}^*a_{13}?a_{16}?a_{14}?\&
a_4a_{11}?\&a_{20}?a_{19}?\&a_{5}?$\\
$1009$&${a_1}^*a_3a_4a_{17}?a_{11}?{a_{15}}^*a_{14}?\&a_{6}a_{20}?
\&a_{5}?a_{16}?\&a_{13}?\&a_{19}?$\\
$1009$&$(a_1|a_3|a_4|a_5|a_6|a_{11}|a_{13}|a_{16}|a_{17}
|a_{20})^+(a_{14}|{a_{15}}^*)$\\

3&9\&2\&1\&1\\
4&4\&4\&2\&2\&1\\
4&7\&2\&2\&1\&1\\

\hline
\multirow{3}{*}\textbf{phdthesis}&$(a_1|a_2|\cdots|a_{22})^*$\\
$72$&$a_1a_3a_6a_{17}?a_{21}?a_{20}?a_{9}?a_{13}?a_{12}?\&a_{22}$\\
$72$&$a_1a_3a_6a_{12}?a_{21}?a_{22}a_{13}?a_{20}?\&a_{17}?a_{9}?$\\
$72$&$a_1a_3a_6a_{17}?a_{21}?a_{20}?a_{13}?a_{9}?a_{12}?\&a_{22}$\\
$72$&$a_1a_3a_6(a_{12}|a_{21})?(a_9|a_{17}|a_{22})^+(a_{13}|a_{20})?$\\

1&9\&1\\
1&8\&2\\
1&9\&1\\

\hline

\multirow{3}{*}\textbf{www}&$(a_1|a_2|\cdots|a_{22})^*$\\
$38$&${a_1}^*{a_2}^*a_3a_4?a_6?a_{11}$\\
$38$&${a_1}^*{a_2}^*a_3a_4?a_6?a_{11}$\\

$38$&${a_1}^*{a_2}^*a_3a_4?a_6?a_{11}$\\
$38$&$({a_1}^*|{a_2}^*)a_3a_4?a_6?a_{11}$\\

0&6\\
0&6\\
0&6\\

\hline

\end{tabular}
\end{center}
\vskip 10pt
  \caption{Results of exact algorithm, conMiner, conDAG and Trang on DTDs}
\label{tab:schema}
\end{table}

Table~\ref{tab:schema} lists the non-trivial element definitions in the above mentioned DTD together with the results derived by exact algorithm, heuristic algorithm conMiner, approximation algorithm conDAG, and Trang. We implement 
the exact algorithm using conMiner by replacing function \verb"clique_removal" with an exponential time algorithm proposed by S. Tsukiyama~\cite{allmis92}. We also list the number of interleavings used and the simplified of our results to have a clear view of their relationship. The numbers in the first column the first five rows in each element refer to the element name and the sample size respectively. The numbers in the first column the last three rows in each element refer to the number of interleavings used by the result of exact algorithm, conMiner and conDAG, respectively. It can be verified that all expressions learned by exact algorithm, conDAG and conMiner are more strict than that of Trang and the original DTDs which indicates there exists much more over-permissive in both the original DTDs and the results of Trang.

We note that there may exist many minimal expressions given a set of unordered strings. For instance, for \verb"phdthesis", the form of the result of conDAG is the same with the exact minimal expression. The orders among symbols of their first siblings, however, differ widely. This is due to the fact that a diagraph may have several different topological sorts. Therefore, we ignore the sequel in the symbols and only compare their simplified form. The table shows clearly that conDAG yields concise super-approximations to the exact minimal expressions. Although for \verb"proceedings, incollection and phdthesis", the expressions produced by conMiner and conDAG have the same number of interleavings, conDAG yields longer length of siblings and thus finds solutions of better quality as compared to the solutions found by the approximation algorithm.
\section{Conclusion}
\label{sect:conwk}
This paper proposes a strategy for learning a class of regular expressions with interleaving: first, compute consistent partial order $T$, then equip each factor with counting operators. As future work, we will investigate several interesting problems inspired by this study. First, we would like to extend our algorithms for more expressive schemas, for example schemas allow disjunction $``|"$ within siblings. Second, how to extend algorithms to mine all independent frequent closed partial orders~\cite{frequent06} is also an attractive topic.

\section*{Acknowledgement}We thank the users of Stack Overflow~\cite{stackoverflow}, for reminding us the maximum independent set problem.


\begin{thebibliography}{99}\small
\bibitem{sibilings12}
S. Abiteboul, P. Bourhis and V.Vianu: Highly expressive query languages for unordered data trees. In: Proceedings of the 15th International Conference on Database Theory, pp. 46-60 (2012)
\bibitem{BEX09}
G. J. Bex, W. Gelade, W. Martens and F. Neven: Simplifying XML schema: effortless handling of nondeterministic regular expressions. In: Proceedings of the 2009 ACM SIGMOD International Conference on Management of data, pp. 731-744 (2009)
\bibitem{bone13}
I. Boneva, R. Ciucanu and S. Staworko: Simple schemas for unordered XML. arXiv preprint arXiv:1303.4277 (2013)
\bibitem{BNS06}
G. J. Bex, F. Neven, T. Schwentick and S.vansummeren: Inference of concise DTDs from XML data. In: Proceedings of the 32nd international conference on Very large data bases, pp. 115-126. VLDB Endowment (2006, September)
\bibitem{inferschema}
G. J. Bex, F. Neven and S. Vansummeren: Inferring XML schema definitions from XML data. In: Proceedings of the 33rd international conference on Very large data bases, pp. 998-1009. VLDB Endowment (2007, September)
\bibitem{inferdeter}
G. J. Bex, G. Wouter, F. Neven and S. Vansummeren: Learning deterministic regular expressions for the inference of schemas from XML data. ACM Transactions on the Web (TWEB) 4.4:14 (2010)
\bibitem{MIS14}
I. Alexey, M. Antonio and J. Marques-Silva: On Reducing Maximum Independent Set to Minimum Satisfiability. Theory and Applications of Satisfiability Testing–SAT 2014. Springer International Publishing, 103-120 (2014)
\bibitem{Trang}J. Clark. Trang: Multi-format schema converter based on RELAX NG,
    \url{http://www.thaiopensource.com/relaxng/trang.html}
\bibitem{gold67}
E. M. Gold: Language identification in the limit. Information and control. 10(5), pp. 447-474 (1967)
\bibitem{bail98}
R. W. Bailey: The number of weak orderings of a finite set. Social Choice and Welfare. 15(4), pp. 559-562 (1998)
\bibitem{de09}
J. M. de Koninck: Those Fascinating Numbers. American Mathematical Soc (2009)
\bibitem{fast13}
D. D. Freydenberger and T. K\"{o}tzing: Fast learning of restricted regular expressions and DTDs. In: Proceedings of the 16th International Conference on Database Theory, pp.45-56 (2013)
\bibitem{Ciu13}
R. Ciucanu and S. Staworko: Learning schemas for unordered xml. arXiv preprint arXiv:1307.6348 (2013)
\bibitem{frag03}
A. Gionis, T. Kujala and H. Mannila: Fragments of order. In: Proceedings of the ninth ACM SIGKDD international conference on Knowledge discovery and data mining, pp. 129-136 (2003)
\bibitem{glob00}
H. Mannila and C. Meek: Global partial orders from sequential data. In: Proceedings of the sixth ACM SIGKDD international conference on Knowledge discovery and data mining, pp. 161-168 (2000)
\bibitem{aprio94}
R. Agrawal and R. Srikant: Fast algorithms for mining association rules. Proc. 20th int. conf. very large data bases, Vol. 1215, pp. 487-499. VLDB (1994)
\bibitem{frequent06}
J. Pei, H. Wang, J. Liu, K. Wang and et al.: Discovering frequent closed partial orders from strings. Knowledge and Data Engineering, IEEE Transactions on 18.11, pp.1467-1481 (2006)
\bibitem{schema}G. Miklau. XMLData Repository, Nov. 2002,
       \url{http://www.cs.washington.edu/research/xmldatasets/}
\bibitem{ramsey92}
R. Boppana and M. M. Halld$\acute{o}$rsson: Approximating maximum independent sets by excluding subgraphs. BIT Numerical Mathematics 32.2: 180-196 (1992)
\bibitem{allmis92}
S. Tsukiyama, M. Ide, H. Ariyoshi and I. Shirakawa: A new algorithm for generating all the maximal independent sets. SIAM Journal on Computing 6.3: 505-517 (1977)
\bibitem{stackoverflow}
Algorithm to divide a set of symbols with constraints into minimun number of subsets, \url{http://stackoverflow.com/q/29117747/4684328}
\newpage
\appendix



\end{thebibliography}
\end{document}